\title{A Critique of Kumar's
``Necessary and Sufficient Condition for Satisfiability 
of a Boolean Formula in CNF and Its Implications on $\pe$ versus
$\np$ problem."}
\author{Michael C. Chavrimootoo\thanks{Supported in part by NSF grant 
		CCF-2006496.}\ } 
\author{Henry B. Welles\protect\footnotemark[1]}
\affil{Department of Computer Science\\University of Rochester\\Rochester, NY 14627, USA}
\date{December 11, 2021}
\newcommand{\condition}{\,\mid \:}
\newcommand{\naturalnumber}{\ensuremath{{\mathbb{N}}}}
\newcommand{\naturalnumberpositive}{\ensuremath{{\mathbb{N}^+}}}
\newcommand{\pe}{\mbox{\rm P}}
\newcommand{\np}{\mbox{\rm NP}}
\newcommand{\conp}{\mbox{\rm coNP}}
\newcommand{\sat}{\mbox{\rm SAT}}
\newcommand{\cnfsat}{\mbox{\rm CNF-SAT}}
\newcommand{\true}{\textit{true}}
\newcommand{\false}{\textit{false}}
\newcommand{\open}{\textit{open}}
\newcommand{\nll}{\textit{null}}
\renewcommand{\P}[1]{\ensuremath{\mathcal{P}(#1)}}
\newcommand{\F}{\ensuremath{{\cal F}}}
\newcommand{\V}{\ensuremath{{\cal V}}}
\newtheorem{theorem}{Theorem}
\begin{document}

\maketitle

\begin{abstract}
In this paper, we analyze the argument made by Kumar in the technical report
``Necessary and Sufficient Condition for Satisfiability of a Boolean Formula in 
CNF and Its Implications on $\pe$ versus $\np$ problem" \cite{kum:t:nec-suff}.
The paper claims to present a polynomial-time algorithm that decides $\cnfsat$\@. We 
show that the paper's analysis is flawed and that the fundamental underpinning of
its algorithm requires an exponential number of steps 
on infinitely many inputs.
\end{abstract}

\section{Introduction} \label{sec:intro}
We provide a summary and critique of the third version of Manoj Kumar's
technical report ``Necessary and Sufficient Condition for Satisfiability of 
a Boolean Formula in CNF and Its Implications on $\pe$ versus $\np$ problem"
\cite{kum:t:nec-suff}. The paper claims to have constructed an algorithm
that decides $\cnfsat$ in polynomial time.
In order to understand the significance of this claim, one must first 
understand the significance of $\np$-complete problems.

Securing cryptographic systems relies on the assumption that solving $\np$ 
problems is intractable.
For example, consider the problem of integer factorization, which has a 
language version in $\np \cap \conp$.
The security of RSA encryption relies on the
suspicion that integer factorization is 
intractable, even in the typical case.
On the other hand, problem modeling using $\sat$ (another $\np$ problem)
is very common in several 
areas of artificial intelligence. As a result, finding a polynomial-time 
algorithm for $\sat$ would benefit those areas 
tremendously. The question of whether all $\np$ problems can be solved in 
polynomial time is commonly referred to as the $\pe$ vs.\ $\np$ problem and is 
considered the most important unsolved problem in computational complexity 
theory, and arguably, in all of applied mathematics. $\np$-complete problems 
are significant because showing that one is in $\pe$ is enough to imply that 
$\pe=\np$ \cite{kar:b:reducibility}, thereby resolving the $\pe$ vs.\ $\np$ 
problem. Since $\cnfsat$ is an NP-complete problem \cite{kar:b:reducibility}, 
Kumar's purported algorithm to decide $\cnfsat$ in polynomial time would prove 
$\pe = \np$. However, the paper's analysis is deeply flawed and we will show 
that its algorithm runs in exponential time on an infinite number of 
inputs.

In Section~\ref{sec:prelim} we give the preliminaries necessary to
understand Kumar's paper. In Section~\ref{sec:summary}, we summarize its 
definitions and theorems and present the paper's central theorem and algorithm. 
Finally, in Section~\ref{sec:critique} we expose the flaw in the paper's 
algorithm, present an infinite family $\F$ of counterexamples that exploit 
the flaw, and review some optimizations to the algorithm (as proposed in 
Kumar's paper) and show that they have no significant impact on the algorithm's 
runtime if the input is in $\F$.

\section{Preliminaries} \label{sec:prelim}

Let $x$ be a Boolean variable. We call $x$ and $\overline{x}$ literals.
If the variable $x$ has value \true{} (\false), then the literal $x$ also 
has value \true{} (\false), while the literal $\overline{x}$ has value 
\false{} (\true). 
The two literals associated to a variable always have complementary
values and hence are called complementary literals.
A clause is a disjunction of literals and values (\true{} and \false).
For example, $C = (x_1 \lor x_2 \lor x_3)$ is a clause.
We say that a variable $x$ occurs in or appears in a clause if one of the 
two literals $x$ and $\overline{x}$ appears in the clause.
A Boolean formula in conjunctive normal form (CNF) is a conjunction of 
clauses. Such formulas are also called CNF formulas. For example, 
$F = (x_1 \lor x_2) \land (x_3 \lor \overline{x_2} \lor x_1) \land (x_1)$ is 
a CNF formula.
We say that a variable $x$ occurs in or appears in a CNF formula $F$ if $x$
occurs in at least one clause in $F$.
If $C$ is a clause, $V$ is a set of Boolean variables,
and all the variables of $C$ appear in $V,$ then $C$ is said to
be a clause over $V$\@. 
A CNF formula $F$ is said to
be a formula over $V$ if all of $F$'s clauses are over $V$. 
For example, if $V = \{x_1, x_2, x_3\}$, and $F = (x_1 \lor x_2) \land 
(x_3 \lor \overline{x_2} \lor x_1) \land (x_1)$, then $F$ is a Boolean 
formula over $V$.

Given a variable set $V$ and an assignment of values to the variables in $V$,
a clause (over $V$) is said to evaluate to \true{} if at least one of
its disjuncts has value \true{} under the assignment.
For example, given assignment $x_1 = \false,\ x_2 = \false$, the clause 
$C = (x_1 \lor \overline{x_2})$ evaluates to \true.  
On the other hand, given the assignment $x_1= \false,\ x_2 = \true$, 
$C$ does not evaluate to \true.
A CNF formula is satisfiable if there is an assignment of values to its 
variables such that, under the assignment, all of the formula's clauses evaluate to
\true. For example, the formula $F$ defined above is satisfiable as 
all of its clauses evaluate to \true{} under the
assignment $x_1=\true,\ x_2 = \false,\ x_3 = \true.$  
Notice that a formula can have 
multiple satisfying assignments.  The set of all satisfiable
CNF formulas is called $\cnfsat$, and the Satisfiability 
Problem, in this context, is the task of deciding whether
a given input is a satisfiable CNF formula.

Kumar's paper views CNF formulas and clauses as sets. 
A clause is viewed as a set of literals and values, and a CNF formula is a 
set of clauses.
For example, the formula $F$ defined above, would be written as
$\{ \{x_1, x_2\}, \{x_3, \overline{x_2}, x_1\}, \{x_1\}  \}$.
In this paper, we treat all CNF formulas and clauses as sets, just like 
Kumar's paper does.

Finally, we let $\naturalnumber = \{0, 1, 2, \ldots\}$ i.e., the set of
all natural numbers (including zero), and let 
$\naturalnumberpositive = \{1, 2, 3, \ldots\}$ i.e., the set of positive 
natural numbers.

\section{Understanding the Paper's Argument} \label{sec:summary}
Kumar's paper relates the problem of satisfiability to that of computing a 
set with a specific property, which we discuss in Section~\ref{sub:arg}.
We present the key theorems\footnote{
	We note that we do not state the paper's
	theorems as they originally appear. Rather, we provide equivalent 
	statements using more common notation.} 
that lead to this result in Section~\ref{sub:def}
and then provide an overview of the paper's
algorithm that purportedly decides $\cnfsat$ in polynomial time.

\subsection{Definitions and Concepts} \label{sub:def}
For the sake of simplicity, Kumar's paper ignores trivially satisfiable 
clauses, such as $C_1 \cup \{\true\}$, where $C_1$ is a clause, or
$C_2\,\cup \{x, \overline{x}\}$, where $C_2$ is a clause and $x$ is a 
variable, because those are easily recognizable in polynomial time.
The paper refers to such clauses as tautology clauses.
The set of all nontautology clauses over a fixed variable set
is called a complete formula. 
For example, if $V=\{x_1, x_2\}$, then the complete formula is
$F_2 = \{\{x_1, x_2\},$ $\{x_1, \overline{x_2}\},$ $\{\overline{x_1},
x_2\},$
$\{\overline{x_1},$ $\overline{x_2}\},$ $\{x_1\}, 
\{\overline{x_1}\}, \{x_2\},$ $\{\overline{x_2}\}, \emptyset\}$
\cite{kum:t:nec-suff}.
Note that $\emptyset$ is used to refer to the null clause,
which can never be satisfied and can be viewed as equivalent to
$\{\false\}$. 
Additionally, a clause $C$ over a variable set $V$ is
said to be fully populated over $V$ if every variable in $V$ 
occurs in $C$\@. 
If $A$ and $B$ are two unequal and fully populated clauses over 
$V$, then $A$ and $B$ are said to be sibling clauses.
Kumar's paper also defines the cardinality of a formula
$F$, denoted by $\lVert F\rVert$, to be the number clauses in
the formula. Finally, given a set $S$, $\P{S}$ denotes the powerset
of $S$. We list below, Theorems~\ref{t:compl} and~\ref{t:siblings},
which are used in the proof of Theorem~\ref{t:main}.
\begin{theorem}[\cite{kum:t:nec-suff}, Corollary 10.3]
\label{t:compl}
Let $V$ be a variable set and ${\cal S}$ be the set of all
fully populated clauses over $V$.
Then the complete Boolean formula $F$ (over $V$) can be written as
$F = \bigcup_{C \in {\cal S}} \P{C}.$
\end{theorem}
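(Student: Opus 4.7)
The plan is to prove the set equality $F = \bigcup_{C \in \mathcal{S}} \P{C}$ by mutual inclusion, after first fixing the convention suggested by the running $F_2$ example. Although the bare definition of ``fully populated'' only demands that every variable of $V$ occur in $C$, the paper lists exactly four fully populated clauses over $\{x_1,x_2\}$ (one per truth assignment), so $\mathcal{S}$ must be read as the set of fully populated \emph{nontautology} clauses, i.e., those containing exactly one of $x,\overline{x}$ for each $x \in V$ and no occurrence of $\true$ or $\false$. In particular, every $C \in \mathcal{S}$ is a nontautology clause with $|C|=|V|$.

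For the direction $\bigcup_{C \in \mathcal{S}} \P{C} \subseteq F$, I would take any $D \in \P{C}$ with $C \in \mathcal{S}$ and observe that since $C$ contains no $\true$ and at most one of $x,\overline{x}$ for each variable $x \in V$, the same holds for any subset $D$. Hence $D$ is a nontautology clause over $V$, and thus $D \in F$. The null clause is included automatically, since $\emptyset \in \P{C}$ for every $C \in \mathcal{S}$. For the reverse direction $F \subseteq \bigcup_{C \in \mathcal{S}} \P{C}$, given $D \in F$, I would construct a witness $C \in \mathcal{S}$ with $D \subseteq C$ by walking through the variables of $V$ in turn: for each variable already occurring in $D$, do nothing; for each variable that does not, adjoin (say) its positive literal to $D$. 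Since $D$ was already nontautology and the new literals concern variables previously absent from $D$, no complementary pair is introduced, so the resulting clause $C$ is a fully populated nontautology clause extending $D$, whence $D \in \P{C}$.

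The main obstacle in this argument is not in either inclusion but in pinning down the working meaning of ``fully populated'': the literal reading of the definition admits tautology clauses such as $\{x_1, \overline{x_1}, x_2\}$, and then $\P{\cdot}$ would leak tautology subclauses like $\{x_1, \overline{x_1}\}$ onto the right-hand side, which do not lie in $F$. Once the convention is fixed by appeal to the paper's example, the proof reduces to the two short bookkeeping arguments above.
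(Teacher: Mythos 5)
Your argument is correct, but there is nothing in the paper to compare it against: Theorem~\ref{t:compl} is imported verbatim from Kumar's report (his Corollary~10.3) and is stated here without any proof --- the paper only records the consequence that every clause over $V$ is a subset of some fully populated clause over $V$. So your write-up supplies a proof where the paper deliberately gives none. Both inclusions are handled correctly: a subset of a fully populated nontautology clause contains no \true{} and at most one literal per variable, hence lies in the complete formula; and any nontautology clause over $V$ extends to a fully populated nontautology clause by adjoining one literal for each absent variable, without creating a complementary pair. Your preliminary observation is also the right one to make, and is not mere pedantry: under the literal reading of ``fully populated,'' $\mathcal{S}$ would contain clauses such as $\{x_1,\overline{x_1},x_2\}$, and then $\bigcup_{C\in\mathcal{S}}\P{C}$ would contain $\{x_1,\overline{x_1}\}$, which is not in the complete formula, so the equality would be false as stated. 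Reading $\mathcal{S}$ as the fully populated \emph{nontautology} clauses is forced by the $F_2$ example (the complete formula has $3^{|V|}$ clauses and the intended $\mathcal{S}$ has $2^{|V|}$), and with that convention your two bookkeeping arguments close the proof.
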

This theorem implies that given a variable set $V$, every clause over $V$ is 
a subset of some fully populated clause over $V$.

\begin{theorem}[\cite{kum:t:nec-suff}, Theorem 7.2]
\label{t:siblings}
Let $V$ be a variable set.
If $C_1$ and $C_2$ are two sibling clauses that are over $V$,
then $(\forall D_1\in \P{C_1}-\P{C_2})(\exists V' \subseteq V)
(\exists D_2\in\P{C_2})$
$[D_1$ and $D_2$ are sibling clauses that are over $V'].$
\end{theorem}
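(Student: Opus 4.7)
The approach is constructive: given $D_1 \in \P{C_1} - \P{C_2}$, I will exhibit the desired $V'$ and $D_2$ explicitly, and then verify each condition in the definition of sibling clauses.

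First I would let $V'$ be the set of variables that actually occur in $D_1$. Since $D_1 \subseteq C_1$ and $C_1$ is over $V$, we get $V' \subseteq V$ immediately, and $D_1$ is by construction a clause over $V'$ that is fully populated over $V'$. Next, I would define
\[
  D_2 = \{\ell \in C_2 \condition \text{the variable underlying } \ell \text{ lies in } V'\}.
\]
Then $D_2 \in \P{C_2}$ by construction, and $D_2$ is a clause over $V'$. To see that $D_2$ is fully populated over $V'$, I would use the hypothesis that $C_1$ and $C_2$ are sibling clauses over $V$: in particular, $C_2$ is fully populated over $V$, so every variable in $V' \subseteq V$ occurs in $C_2$, and therefore (by definition of $D_2$) in $D_2$.

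The remaining task is to show $D_1 \neq D_2$. Since $D_1 \notin \P{C_2}$, there must be some literal $\ell \in D_1$ with $\ell \notin C_2$. Then $\ell \notin D_2$ (because $D_2 \subseteq C_2$), so $D_1$ and $D_2$ differ, and both are fully populated clauses over $V'$, making them sibling clauses over $V'$ as required.

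The proof is essentially bookkeeping; the main subtlety I expect is simply keeping the two conditions ``over $V'$'' and ``fully populated over $V'$'' straight and making sure the appeal to $C_2$ being fully populated over $V$ (rather than merely over $V'$) is made at the right place. Nothing hinges on whether clauses are tautological or not, so the argument should go through without further case analysis.
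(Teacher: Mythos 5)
Your construction is correct: taking $V'$ to be the variable set of $D_1$ makes $D_1$ trivially a fully populated clause over $V'$, the restriction $D_2$ of $C_2$ to $V'$ is fully populated over $V'$ precisely because $C_2$ is fully populated over the larger set $V$, and $D_1 \notin \P{C_2}$ supplies an element of $D_1$ outside $D_2 \subseteq C_2$, giving $D_1 \neq D_2$; all three conditions in the definition of sibling clauses are thus verified. Note, however, that there is nothing in the paper to compare this against: Theorem~\ref{t:siblings} is imported verbatim from Kumar's report and is stated without proof here, accompanied only by an informal gloss. One small remark on that gloss: it claims one can arrange that $D_1$ and $C_2$ share \emph{no} literal, which is stronger than what your construction (or the formal statement, which quantifies over every $D_1 \in \P{C_1}-\P{C_2}$, including those sharing literals with $C_2$) delivers; your proof establishes exactly the formal statement, which is what is asked.
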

Informally, it means that given two sibling clauses $C_1$ and $C_2$,
one can construct a pair of sibling clauses, $D_1 \subseteq C_1$ 
and $D_2 \subseteq C_2$, such that $D_1$ and $C_2$ have no common literal.

\subsection{The Argument} \label{sub:arg}
By combining the above-mentioned definitions and theorems, 
the following theorem is proved:
\begin{theorem}[\cite{kum:t:nec-suff}, Theorem 10.8]
\label{t:main}
Let $V$ be a variable set.
A Boolean formula $F$ (over $V$) is satisfiable if and only if there exists 
fully populated 
clause $C$ (over $V$) such that $(\forall E\in \P{C})[E \not\in F]$.
\end{theorem}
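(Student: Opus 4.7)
The plan is to establish a bijection between truth assignments on $V$ and nontautology fully populated clauses over $V$, and then observe that a clause being falsified under an assignment is exactly the same as that clause being a subset of the corresponding fully populated clause. Concretely, for each assignment $\alpha\colon V \to \{\true{}, \false{}\}$ I would define $C_\alpha$ to contain the literal $x$ whenever $\alpha(x) = \false{}$ and the literal $\overline{x}$ whenever $\alpha(x) = \true{}$. Then $C_\alpha$ is a nontautology fully populated clause, every literal of $C_\alpha$ evaluates to \false{} under $\alpha$, and $\alpha \mapsto C_\alpha$ is a bijection between the set of assignments on $V$ and the set of nontautology fully populated clauses over $V$.

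The crux is an intermediate lemma: for any clause $D$ over $V$, the assignment $\alpha$ falsifies $D$ if and only if $D \subseteq C_\alpha$. The ``if'' direction is immediate, since every literal of $D$ already lies in $C_\alpha$ and is therefore false under $\alpha$. For the ``only if'' direction, every literal of $D$ is false under $\alpha$ and hence must appear in $C_\alpha$; Theorem~\ref{t:compl} can be invoked here to first embed $D$ into some fully populated clause and then verify that when $\alpha$ falsifies $D$ the unique fully populated completion whose literals are all false under $\alpha$ is $C_\alpha$.

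With the lemma in hand, the theorem reduces to a chain of equivalences: $F$ is satisfiable iff some $\alpha$ makes every clause of $F$ true, iff some $\alpha$ falsifies no clause of $F$, iff (by the lemma) some $\alpha$ satisfies $D \not\subseteq C_\alpha$ for every $D \in F$, iff (by the bijection) some nontautology fully populated clause $C$ satisfies $D \not\subseteq C$ for every $D \in F$, which is exactly the condition $(\forall E \in \P{C})[E \notin F]$ since $\P{C}$ is precisely the set of $E$ with $E \subseteq C$.

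The main bookkeeping obstacle, rather than a genuine mathematical difficulty, is handling corner cases around tautologies: tautology clauses that might sit in $F$ are satisfied by every $\alpha$ and so never constrain the argument, and the existential quantifier over fully populated $C$ should be read as ranging over the nontautology ones because tautology fully populated clauses are not falsified by any assignment and would break the bijection. Theorem~\ref{t:siblings} does not appear strictly necessary for this direct argument, so my expectation is that it enters Kumar's original proof as a technical stepping stone (for instance, to transport the existence of $C_\alpha$ across structurally related fully populated clauses), but its role is auxiliary and does not affect the correctness of the approach above.
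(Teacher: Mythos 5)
Your proof is correct, and for one of the two directions it takes a genuinely more direct route than the one summarized in the paper. The direction ``satisfiable $\Rightarrow$ such a $C$ exists'' is essentially the paper's argument in contrapositive form: the paper's steps (1) and (3) are exactly your correspondence $\alpha \mapsto C_\alpha$ together with one half of your lemma (any subset of $C_\alpha$ is falsified by $\alpha$). The other direction is where you diverge: the paper routes through Theorems~\ref{t:compl} and~\ref{t:siblings}, producing for each $D \in F$ a sibling clause $E \in \P{C}$ and then arguing that of two sibling clauses at least one is true under any assignment, whereas you observe directly that $D \not\subseteq C_\beta$ forces some literal of $D$ to lie outside $C_\beta$ and hence to be true under $\beta$. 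These come down to the same combinatorial fact, but your packaging---a single bijection plus the equivalence ``$\alpha$ falsifies $D$ iff $D \subseteq C_\alpha$''---collapses the whole theorem into one chain of equivalences and confirms your suspicion that Theorem~\ref{t:siblings} is only a technical stepping stone; indeed you do not even need Theorem~\ref{t:compl}, since the fact that every literal false under $\alpha$ lies in $C_\alpha$ is immediate from the definition of $C_\alpha$. One small point of fidelity to the statement as written: the existential ranges over all fully populated clauses, not just nontautology ones, so rather than reinterpreting the quantifier you should add that if a tautology fully populated $C$ witnesses the condition, then any nontautology fully populated $C' \subseteq C$ (obtained by picking one literal per variable from $C$) satisfies $\P{C'} \subseteq \P{C}$ and hence also witnesses it; with that one-line remark the argument proves the theorem exactly as stated.
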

\begin{proof}[Proof summary]
The ``if" part is proved by way of contradiction, using the assumption
that the consequent does not hold. The proof then demonstrates that,
given a formula $F$ that satisfies the antecedent,
(1) for each assignment $\alpha$ over $V$, there is a fully populated 
clause $C_\alpha$ that evaluates to \false{} under $\alpha$, 
(2) by the assumption, it follows that for each fully populated clause $\tilde{C}$,
there is a subset $E$ of $\tilde{C}$ such that $E \in F$, and
(3) given an assignment $\beta$ and a clause $\hat{C}$, if $\hat{C}$ evaluates to 
\false{} under $\beta$, then every subset of $\hat{C}$ also evaluates to \false{} 
under $\beta$.\footnote{Kumar's paper actually proves (1) and (3), 
but they're rather easy to see so we don't prove them here.}
For each assignment $\gamma$ over $V$, let $C_\gamma$ be a clause as described in
(1). It then follows from (2) and (3) that 
there is a subset of $C_\gamma$ that evaluates to \false{} and is a clause of $F$.
Therefore, $F$ is not satisfiable, which is a contradiction.

The ``only if" part is more complicated and makes use of Theorems
\ref{t:compl}~and~\ref{t:siblings}, given a clause $C$ that satisfies the 
antecedent, to show that for each clause 
$D \in F$, there is a clause $E \in \P{C}$ such that $D$ and $E$ are
sibling clauses. 
The next argument is that, given an assignment $\alpha$ and two sibling clauses,
at least one of the two clauses evaluates to \true{}
under assignment $\alpha$.
By picking an assignment $\beta$ such that $C$ evaluates to \false, it 
follows that every $E \in \P{C}$ also evaluate to \false{} under $\beta$
and that each $D$, which is a sibling clause of some $E \in \P{C}$, must then
evaluate to \true{} under $\beta$, thereby proving that $F$ is 
satisfiable.
\end{proof}

This theorem asserts that to show satisfiability of a Boolean formula $F$,
one only needs to find a fully populated clause $C$ such that $(\forall E\in 
\P{C})[E \not\in F]$. Failure to find such a clause implies
that $F$ is not satisfiable. 
To tackle this clause-finding problem, Kumar's paper constructs a tree based 
on the input CNF formula and searches for the clause in that tree.
We reproduce the paper's algorithm to build and search a tree, given a CNF 
formula, as Algorithm~\ref{alg:kumar}. 
To assist in this search, and in 
an attempt to prevent the tree from growing
exponentially large, Kumar's paper also implements a pruning algorithm.
We have also modified some of the paper's figures to produce 
Figure~\ref{fig:tree} to exemplify how the tree is constructed and pruned.

\begin{figure}[ht!]
	\centering
	\caption{The construction and pruning of the tree when 
		$F=\{\{\overline{x_1}\}, \{x_1, 
		\overline{x_2}\}\}$.}\label{fig:tree}
	\begin{subfigure}[b]{0.4\linewidth}
		\includestandalone{reprone}
		\caption{The subtree constructed from processing clause 
			$\{\overline{x_1}\}$.}
		\label{fig:reprone}
	\end{subfigure}
	\hspace*{0.1\linewidth}
	\begin{subfigure}[b]{0.4\linewidth}
		\includestandalone{pruneone}
		\caption{The result of pruning the previous subtree.}
		\label{fig:pruneone}
	\end{subfigure}
	
	\begin{subfigure}[b]{0.4\linewidth}
		\includestandalone{reprtwo}
		\caption{The resulting subtree after processing clause
			$\{x_1, \overline{x_2}\}$.}
		\label{fig:reprtwo}
	\end{subfigure}
	\hspace*{0.1\linewidth}
	\begin{subfigure}[b]{0.4\linewidth}
		\includestandalone{prunetwo}
		\caption{The resulting subtree after pruning the previous
			subtree.}
		\label{fig:prunetwo}
	\end{subfigure}	
\end{figure}

Construction begins with the empty tree that contains only the root node, which is a special node with 
only one child pointer.
All the other nodes in the tree have two child pointers. When those pointers 
are not assigned to a node, they can have one of two values: \textit{open} or
\textit{null}. The former indicates that a child can be added at that
location, while the latter indicates that no child can ever be added 
there.
By default, when a node is created, its
child pointers are set to \textit{open}.
Each node is labeled with a variable name. We shall say a variable $x$ is in 
the tree if there is a node in the tree with label $x$.
The left pointer out of a node labeled $x$ represents
the literal $\overline{x}$, while the right pointer represents literal
$x$. 
Kumar's paper treats these representations (of literals) as the labels of 
the pointers. Now, consider a path $p$ from the root downwards (optionally 
including an \textit{open} pointer, if there is one). 
The set of labels on the pointers in $p$ describes a unique clause. 
Let $V$ be the set of variables appearing in the tree.
Then, each path in the tree (from the root to a leaf) uniquely identifies a 
clause in the complete formula over $V$.
The main loop of the algorithm proceeds as follows.
For each $C$ in the input formula $F$, let the variables of $C$ be denoted
by $V_C$. For each $x \in V_C$, if $x$ is not in the tree, then each 
\textit{open} pointer is assigned to a new node labeled with $x$.
Once the iteration over $V_C$ ends, the tree is pruned.

Figure~\ref{fig:pruneone} (\ref{fig:prunetwo}) shows how the tree in Figure~\ref{fig:reprone} (\ref{fig:reprtwo}) is pruned by Algorithm~\ref{alg:kumar}. After a clause $C$ has been added to the tree, pruning consists of 
removing all paths representing supersets of $C$ from the tree. Given a 
clause $C$, to remove all of its 
supersets from the tree, it suffices to find the path from the
root that describes $C$. Let the last edge in that path be $e$ and the
corresponding pointer be $q$. Delete everything that is connected downwards of $q$, by setting $q$ to 
\textit{null}.
In the figure, the first
clause we look at is $\{\overline{x_1}\}$. The path that describes it is
\false, $\overline{x_1}$. Thus the left subtree of node $x_1$ 
(along with that edge) is deleted and the left pointer of that node is set
to $\nll$. The same process is used to prune the 
tree further upon encountering clause $\{x_1, \overline{x_2}\}$.
Since the complete clause $\{x_1, x_2\}$ ``survives" the pruning process, 
$F$ is satisfiable.

We note, before addressing the error in the algorithm, that while
the explanatory text and examples in Kumar's paper only prune the tree after 
it has been completely constructed, the actual code that is provided prunes 
the tree after each clause has been processed. 

\begin{algorithm}[h!]
\caption{Algorithm to purportedly decide CNF-SAT in polynomial time.}
\label{alg:kumar}
\begin{algorithmic}[1]
\Ensure Input $F$ is a CNF formula.
\State $V \leftarrow$ set of variables appearing in $F$.
\State $S \leftarrow$ root node with its pointer set to \open{}.
\ForAll{$C \in F$} \label{line:clauses}
    \If{$C = \emptyset$} \label{line:null}
        \State Reject.
    \EndIf
    \If{$C$ is not a \textit{tautology clause}} \label{line:taut}
        \ForAll{variables $v \in C$} \label{line:vars}
            \If{$S$ does not contain a node labeled $v$}
                \If{$S$ contains no \open{} pointers} \label{line:closed}
                    \State Reject.
                \EndIf
                \State Add a distinct node labeled
                $v$ to each \open{} pointer. \label{line:build}
            \EndIf
        \EndFor
        \ForAll{pointers $p$ in $S$}\label{line:prune}
            \If{the clause represented by the path from the root
            to $p$ supersets $C$}
                \State Set $p$ to \nll{} and delete all nodes below it.
            \EndIf
        \EndFor
    \EndIf
\EndFor
\If{$S$ contains no \open{} pointer}
    \State Reject.
\Else
    \State Accept.
\EndIf
\end{algorithmic}
\end{algorithm}

\section{Identifying the Error} \label{sec:critique}

The algorithm iterates over every clause $C \in F$, rejecting if $C = 
\emptyset$ and
ignoring $C$ if it is a tautology clause. If neither case is true, then the algorithm
iterates over all the variables in $C$ that are not in the tree $S$.
For each variable $x$, if there are no \open{} pointers
in $S$, then the algorithm rejects. Otherwise, $x$ is added to the tree. 
It's only on line~\ref{line:prune} that the algorithm starts to prune supersets
of $C$---this is the paper's crucial error.\footnote{
We note in passing that, technically, Kumar's algorithm accepts
the empty formula (i.e., $F=\emptyset$) when it should not. 
However,
correcting this error is rather trivial so we do not give that
error further consideration.}
Since the subtree construction takes place
in full (for a given clause) before pruning, 
there is the possibility that an exponentially large
tree will be produced before the algorithm has a chance to prune. 
To make matters worse, during the pruning step the algorithm iterates 
over every pointer in the tree, potentially iterating over an exponential 
number of pointers. We construct an infinite family of counterexamples below 
in which both issues occur.

\subsection{Counterexample}\label{sub:counter}
We now define the infinite family $\F$ of counterexamples to the proposed 
polynomial runtime of Algorithm~\ref{alg:kumar}. 
Without loss of generality, let the set of all variables be $\V = 
\{x_i \condition i \in \naturalnumberpositive \}.$ 
For each $n \in \naturalnumberpositive$, we define
$F_n = \{\{ x_j \in \V \condition j \in \naturalnumberpositive \land j \leq n\}\}$
and let $\F = \{ F_k \condition k \in \naturalnumberpositive \land k > 1 \}$ be 
our family of counterexamples.
Note that all Boolean formulas 
in $\F$ are satisfiable (although this has no bearing on the runtime of the 
algorithm).
We will now show how $\F$ precludes Algorithm~\ref{alg:kumar} from being
a polynomial-time algorithm.

Fix $F_n \in \F$. 
Since $\lVert F_n \rVert = 1$, the main loop of the algorithm only iterates 
once and solely inspects the single clause in $F_n$, $C$.
Because $C \neq \emptyset$ and $C$ is not a tautology clause, the algorithm 
will proceed to loop over all variables $x \in C$ on line \ref{line:vars}. 
By design, the algorithm won't begin pruning until line~\ref{line:prune}, at which 
point all variables in $C$ will have been added to the tree. 
$C$ will be the first clause the algorithm sees as $F_n$ contains only one 
clause. Since pointers are only set to $\nll$ during pruning, and the algorithm has yet to prune, all
pointers originating from leaf nodes will be \open{} and the check on
line~\ref{line:closed} will fail.
As $\lVert C \rVert = n$, the loop on line~\ref{line:vars} only runs $n$ 
times.
At the $i$th iteration of that loop, the algorithm will insert $2^{i-1}$ new 
nodes. Hence, after the loop has run $n$ times, the resulting tree will contain 
$1+\sum_{i=1}^n 2^{i-1} = 2^n$ nodes (including the root node). 
Intuitively, because no pruning takes
place, after the algorithm has looped through all $x \in C$ the tree will 
contain a branch representing every possible assignment to $C$. Hence, the 
size of this tree will be $2^n$.

Furthermore, since $C$ contains all the variables in $V$, the only paths that 
can represent supersets of $C$ are those that also contain all the variables in 
$V$. In fact, only one path in the tree can represent a superset of $C$: the 
path representing $C$ itself. This is because every other path must 
represent either a subset of $C$ or a sibling clause of $C$.
As a 
result, only the pointer at the end of the path representing $C$ would
be set to \nll{} during pruning and the size of the tree would still be 
exponential (specifically $2^n-1$). Even worse, to accomplish this the 
pruning step iterates over every pointer in the tree. Since the tree is 
exponentially large, line \ref{line:prune} will explore $2^{n+1}-1$ 
pointers.

We now argue that there is no polynomial that
upper bounds the runtime of Algorithm~\ref{alg:kumar}.
Without loss of generality, we can assume that there is a polynomial, say $g$, 
from $\naturalnumber$ to $\naturalnumber$
such that for each $n \in \naturalnumberpositive$,
the length of $F_n \in \F$
when given as an input to the algorithm is upper bounded by $g(n)$.
This is because one only needs roughly $\log_2(n)$ bits to represent each of the $n$
literals and a constant number of
extra bits to denote the separation between each pair of literals.
Let $q : \naturalnumber \rightarrow \naturalnumberpositive$
be the hypothesized polynomial that upper bounds the runtime of
Algorithm~\ref{alg:kumar}.
From our observations we have seen that, for each $n \in 
\naturalnumberpositive$ Algorithm~\ref{alg:kumar} performs at least
$2^{n+1}-1$ steps on input $F_n \in \F$. 
However, there exists a sufficiently large $n_0 \in \naturalnumberpositive$, such that 
for each $n \geq n_0$,
$2^{n+1}-1 > q(g(n))$.
Therefore, $q$ does not upper bound 
Algorithm~\ref{alg:kumar}'s runtime and so
there is no polynomial that upper bounds the runtime of 
Algorithm~\ref{alg:kumar}.

\subsection{Attempted Optimizations}\label{sub:opt}
Kumar's paper presents an algorithm to check for the existence of tautology 
clauses in polynomial time. However, all this algorithm does is check 
whether there are tautology clauses in the given Boolean formula. 
Because there is no $F \in \F$ that contains a tautology clause, 
these optimizations do not affect our family of counterexamples.

The paper also presents the following bounds (see 
\cite[Section 14]{kum:t:nec-suff}), which at first glance seem to 
help detect those formulas that can result in exponentially large trees.
However, as we will show, the bounds fail to cordon off all such Boolean 
formulas. Before reviewing the proposed bounds, new notation must be 
introduced. The number of clauses containing the literal $x$ in a Boolean 
formula $F$ will be denoted by 
$\#(F,x) = \lVert \{C \condition C \in F \wedge x \in C\} \rVert$.\footnote{
While this notation is slightly different from Kumar's, 
it is equivalent with less room for ambiguity 
(see \cite[Section 13.2]{kum:t:nec-suff}).}
Given a Boolean formula $F$ which contains no tautology clauses, and given that 
$F$ is over a variable set $V$ containing $n$ variables, the 
proposed bounds are:
\begin{enumerate}
    \itemsep0em
    \item \label{bound:cardinality}
    If $\lVert F \rVert > 3^n - 2^n$, then $F$ is unsatisfiable. 
    \item \label{bound:min}
    If $(\exists x \in V)[\text{min}(\#(F,x), \#(F,\overline{x})) > 3^{n-1} -
    2^{n-1}]$, then $F$ is unsatisfiable.
    \item \label{bound:false}
    If $(\exists x \in V)[\#(F,x) \leq 3^{n-1} - 2^{n-1} < \#(F,\overline{x})]$,
    then each satisfying assignment must map $x$ to \false.
    \item \label{bound:true}
    If ($\exists x \in V)[\#(F,\overline{x}) \leq 3^{n-1} - 2^{n-1} < \#(F,x)]$,
    then each satisfying assignment must map $x$ to \true.
\end{enumerate}
Fix $F_n \in \F$ and let 
$V$ be the set of variables appearing in $F_n$.
We will show that $F_n$ is not 
affected by these bounds. 
Since $\lVert F_n \rVert = 1$ and $n > 1$, bound \ref{bound:cardinality} 
does not apply.
Note that because $F_n$ only contains one clause, and because that clause
contains all variables in
$V$, for all $x \in V$ it is always the case that $\#(F_n,x) = 1$ and
$\#(F_n,\overline{x}) = 0$. This means that in bound~\ref{bound:min}, the 
min term
will always be min$(1, 0) = 0$ which is never greater than the bound of
$3^{n-1}-2^{n-1}$ when $n > 1$.
Thus $F_n$ is unaffected by bound~\ref{bound:min}. In
bound~\ref{bound:false}, the inequality will evaluate to 
$1 \leq 3^{n-1}-2^{n-1} < 0$ 
which is never true, hence $F_n$ is unaffected by bound~\ref{bound:false}. In
bound~\ref{bound:true} we have $0 \leq 3^{n-1}-2^{n-1} < 1$ which is never 
true as we require $n > 1$. Thus $F_n$ is unaffected by 
bound~\ref{bound:true}. Hence, all counterexamples in $\F$ are unaffected by 
these proposed bounds and Algorithm~\ref{alg:kumar} still runs in exponential 
time.

\section{Conclusion} \label{sec:conclusion}
Due to the oversight in Kumar's paper,
there is no polynomial that upper bounds the runtime of
Algorithm~\ref{alg:kumar},
even when the inputs are
Boolean formulas as simple as those in $\F$. Despite 
several of the paper's optimizations to the algorithm, 
the worst-case runtime remains unchanged. Thus Kumar's paper has not given a 
polynomial-time algorithm to decide $\cnfsat$, and so the paper fails to 
show $\pe = \np$ as claimed.

\section{Acknowledgments} \label{sec:ack}
We thank
Lane A. Hemaspaandra, 
Arian Nadjimzadah, and
David E. Narv\'{a}ez
for their comments on earlier drafts of this paper. All remaining errors are
the responsibility of the authors.

\bibliographystyle{alpha}
\bibliography{ref}

\end{document}